\documentclass[letterpaper, 10 pt, conference]{ieeeconf}

\IEEEoverridecommandlockouts                              % This command is only
                                                          % needed if you want to
                                                          % use the \thanks command
\overrideIEEEmargins

\usepackage[utf8]{inputenc}
\usepackage[T1]{fontenc}

\usepackage[english]{babel}
\usepackage{xcolor}
\usepackage{setspace}
\usepackage{graphicx}	
\usepackage{pgfplotstable} 
\usepackage{blkarray}
\usepackage[intlimits]{amsmath}
\usepackage{amsthm,thmtools}
\usepackage{mathtools} 
\usepackage{amssymb}
\usepackage[makeroom]{cancel}
\usepackage{hyperref}
\usepackage{stmaryrd}
\usepackage{mathrsfs}
\usepackage{array}
\usepackage[justification=centering]{caption}
\usepackage{enumitem}
\usepackage[bottom]{footmisc}

\newtheorem{theorem}{Theorem}
\newtheorem{lem}{Lemma}
\newtheorem{coro}{Corollary}
\newtheorem{definition}{Definition}

\DeclareMathOperator*{\argmin}{arg\,min}

\renewcommand{\Re}{\mathbb{R}}

\renewcommand{\paragraph}[1]{\smallskip\noindent\textbf{#1.} }

\author{Alexandre Kircher$^1$, Laurent Bako$^1$, Eric Blanco$^1$, Mohamed Benallouch$^2$ 
\thanks{$^1$ A. Kircher, L. Bako and E. Blanco are  with Universit\'{e} de Lyon, Laboratoire Amp\`{e}re (Ecole Centrale Lyon, CNRS UMR 5005), Ecully F-69134.
        {\tt\small E-mail: alexandre.kircher@ec-lyon.fr}}%
\thanks{$^2$M. Benallouch  is with ECAM Lyon,  40 Montée Saint-Barthélémy, 69321 Lyon, France.}
       % {\tt\small E-mail: mohamed.benallouch@ecam.fr}}%
}
\title{Resilient State Estimation for Discrete-Time Linear Systems}

%%%%%%%%%%%%%%%%%%%%%%%%%%%%%%%%%%%%%%%%%%%%%%%%%%%%%%%%%
\begin{document}
\setstretch{.99}

\maketitle

\begin{abstract}
This paper proposes a resilient state estimator for LTI discrete-time systems.  The dynamic equation of the system is assumed to be affected by a bounded process noise. As to the available measurements, they are potentially corrupted by a noise of both dense and impulsive natures.  In this setting, we construct the estimator as the map which associates to the measurements, the minimizing set of an appropriate (convex) performance function. It is then shown that the proposed estimator enjoys the property of resilience, that is, it induces an estimation error which, under certain conditions, is independent of the extreme values of the (impulsive)  measurement noise. Therefore, the estimation error may be bounded while the measurement noise is virtually unbounded. Moreover, the expression of the bound depends explicitly on the degree of observability of the system being observed and on the considered performance function. Finally, a few simulation results are provided to illustrate the resilience property.
\end{abstract}

\medskip
%\begin{IEEEkeywords}
{\bf \small
\textit{Index terms}---Secure state estimation, sensor attacks, outliers, resilient estimators, Cyber-physical systems.
%\end{IEEEkeywords}
}

%\IEEEpeerreviewmaketitle

\section{Introduction}

We consider in this work the problem of designing state estimators which would be resilient against an (unknown) sparse noise sequence affecting the measurements. By sparse noise we refer here to a  signal sequence which is of impulsive nature, that is, a sequence which is most of the time equal to zero,  except at a few instants where it can take on arbitrarily large values. 
The problem is relevant for example, in the supervision of  Cyber-Physical Systems~\cite{cardenas_secure_2008}. In this application, the supervisory data may be collected by spatially distributed sensors and then sent to a distant processing unit through some communication network. During the transmission, the data may incur intermittent packet losses or adversarial attacks consisting in e.g., the injection of arbitrary signals. 

 This estimation problem was investigated through many different approaches. 
%Data packet losses in networked systems can be modeled as sparse noise sequence. 
 Since the measurements are assumed to be affected by a sequence of outliers which is sparse in time, a natural scheme of solution to the state estimation problem may be to first detect the  occurrences of the nonzero instances of that sparse noise, remove the corrupted data and then proceed with classical estimation methods such as the Kalman filter or Luenberger type of observer \cite{mishra_secure_2017,pasqualetti_attack_2013}. 
Another category of approaches, which are inspired by some recent results in compressive sampling \cite{Candes08-SPM,Foucart13-Book}, rely on sparsity-inducing optimization techniques. A striking feature of these methods  is that they do not treat separately the tasks of detection, data cleaning and estimation. Instead, an implicit discrimination of the wrong data is induced by some  specific properties of the to-be-minimized cost function.  One of the first works that puts forward this approach for the resilient state estimation problem is the one reported in \cite{fawzi_secure_2014}. There, it is assumed that only a fixed number of sensors are subject to attacks (sparse but otherwise arbitrary disturbances). The challenge then resides in the fact that at each time instant, one does not know which sensor is compromised. Note however that the assumptions in \cite{fawzi_secure_2014} were quite restrictive as no process noise or measurement noise (other than the sparse attack signal) was considered. 
These limitations open ways for later extensions in many directions. For example, \cite{shoukry_event-triggered_2016} suggests  a reformulation which  reduces computational cost  by using the concept of event-triggered update;   
\cite{pajic_attack-resilient_2017} considers an observation model which includes dense noise along with the sparse attack signal. In \cite{chang_secure_2018},  the assumption of a fixed number of attacked sensors is relaxed. Finally,  the recent paper \cite{Han19-TAC} proposes a unified framework for analyzing resilience capabilities of most of these optimization-based estimators. Although a bound on the estimation error was derived in this paper, it is not quantitatively related to the properties (e.g., observability) of the dynamic system being observed.

The contribution of the current paper is the design of a (convex)  optimization-based resilient estimator for LTI discrete-time systems.  The available model of the system assumes bounded noise in both the dynamics and the observation equation with the latter being possibly affected by an unknown but sparse attack signal. Contrary to the settings in some existing works, we  did not impose here any  restriction  on the number of sensors which are subject to attacks, that is, any sensor can be compromised at any time. Our main theoretical result concerns the resilience analysis of the proposed estimator. 
 We show that the estimation error associated with the new estimator can be made, under certain conditions, insensitive to the amplitude of the attack signal. Our bound, although necessarily conservative, has the important advantage of being explicitly expressible in function of the properties of the considered dynamic system. This makes it a valuable qualitative tool for assessing the impact of the estimator's design parameters and that of the system matrices on the quality of the estimation.  
For example, it reflects the intuition that the more observable the system is, the larger the number of instances of gross values (of the output noise) it can handle and the smaller the error bound.

\paragraph{Outline} The rest of the paper is organized as follows. The estimation setting is defined in Section \ref{part:ii}. In Section \ref{part:iii} we elaborate on the proposed optimization-based estimator: Necessary technical tools are introduced in Section \ref{subsec:Preliminaries} for the statement and the proof of the main result in Section \ref{subsec:Resilience}. Section  \ref{part:v} illustrates the performance of the estimation method in simulation; Section \ref{part:vi} provides some concluding remarks.

\paragraph{Notations} Throughout this paper, $\Re_{\geq 0}$ (respectively $\Re_{>0}$) designates the set of nonnegative (respectively positive) reals. 
%$\Re_*$ designates the set of real numbers excluding zero. 
We note $\Re^a$ the set of (column) vectors with $a$ real elements and for any vector $z$ in $\Re^a$, $z_i$ with $i$ in $\{1,...,a\}$ is the $i$-th component of $z$. 
Moreover, $\Re^{a\times b}$ is the set of real matrices with $a$ rows and $b$ columns. If $M\in \Re^{a\times b}$, then $M^{\top}$ will designate the transposed matrix of $M$. Notation $\lVert \cdot \rVert$ will represent a given norm over a given set (which will  be specified when necessary). $\lVert \cdot \rVert_2$ is the Euclidean norm, defined by $\lVert z\rVert_2=\sqrt{z^\top z}$ for all $z$ in $\Re^a$. $\left\|\cdot\right\|_1$ will designate the $\ell_1$-norm, defined by $\left\|z\right\|_1=\sum_{i=1}^a |z_i|$ for $z\in\Re^a$. For a finite set $\mathcal{S}$, the notation $|\mathcal{S}|$ will  refer to the cardinality of $\mathcal{S}$.  %\textit{i.e.} its number of elements. 

\section{The estimation  Problem}\label{part:ii}

Consider the following discrete-time  Linear Time-Invariant (LTI) system 
\begin{equation}\label{eq:sys}
\Sigma : \left\{\begin{array}{r  l}
x_{t+1} &= Ax_t+w_t \\
y_t &=Cx_t+f_t  
\end{array}
\right.
\end{equation}
where $x_t\in\Re^n$ is the state vector at time $t$, $y_t\in\Re^{n_y}$ is the output vector at time $t$; $A\in\Re^{n\times n}$ the dynamic matrix of the system and $C\in\Re^{n_y\times n_y}$ is the observation matrix. 
$w_t\in\Re^n$  and $f_t\in\Re^{n_y}$ model respectively the process noise and the output noise both of which are unknown.\\ 
We shall however make the informal assumptions that $\left\{w_t\right\}$  is bounded with a relatively small amplitude. 
As to the sequence  $\left\{f_t\right\}$, it  can take on potentially arbitrarily large values, that is, no explicit bound is imposed on its amplitude. This type of noise can model for example, ordinary measurement noise (of `moderate amount') together with  intermittent faulty measurements, attack signals or packet losses on data transmitted over a communication network. For convenience, one can also view $f_t$ as the sum of two noise components, a dense noise, representing a bounded noise induced by the sensors,   and a \textit{sparse noise} sequence,  \textit{i.e.}, a noise whose instances are equal to zero most of the time but whose non-zero elements can take on arbitrary values.

\paragraph{Problem}The problem considered in this paper is one of estimating the states $x_0,\ldots,x_{T-1}$ of the system \eqref{eq:sys} on a time period $\mathcal{T}$ given $T$ measurements $y_0,...,y_{T-1}$ of the system output. 
We shall seek  an accurate estimate of the state despite the uncertainties in the system equations \eqref{eq:sys} modeled by $w_t$ and $f_t$ the characteristics of which are described above. In particular, we would like the  to-be-designed  estimator to produce an estimate such that the estimation error is, when possible, independent of the maximum amplitude of $\left\{f_t\right\}$. Such an estimator will then be called resilient.

\section{Resilient optimization-based estimator}\label{part:iii}

We propose a convex optimization-based solution to the state estimation problem defined above. Given the system matrices $A$ and $C$ and  $T$ output measurements $y_0,...,y_{T-1}$, consider a performance function 
$F:\Re^{n\times T}\rightarrow \Re_{\geq 0}$ defined by 
\begin{equation}\label{eq:F(Z)}
	F(Z)= \lambda\sum_{t\in\mathcal{T}'}\lVert z_{t+1}-Az_t\rVert^2_2+\sum_{t\in\mathcal{T}}\left\|y_t-Cz_t\right\|_1, 
\end{equation}
where $\mathcal{T}=\left\{0,\ldots,T-1\right\}$,  $\mathcal{T}'=\left\{0,\ldots,T-2\right\}$ and   $Z=\big(\begin{matrix}z_0 & \cdots & z_{T-1}\end{matrix}\big)$, i.e., the vectors $z_t\in \Re^n$ are the columns of the matrix $Z$. Here, $\lambda>0$ is a user-defined parameter which aims at balancing the contributions of the two terms involved in the expression of the performance index $F$. This idea of weighting the terms contained in $F$ could also be done differently depending on the time index, for example by taking terms of the form $\lVert W_t(z_{t+1}-Az_t)\rVert_2^2$ and $\lVert V_t(y_t-Cz_t)\rVert_1$, where $W_t$ and $V_t$ would be positive-definite weighting matrices. 

Let $\mathcal{P}(\Re^{n\times T})$ denote the collection of subsets of $\Re^{n\times T}$. Then the proposed estimator  is defined as the set-valued map $\Psi:\Re^{n_y\times T}\rightarrow \mathcal{P}(\Re^{n\times T})$ which maps the  available measurements $Y\triangleq\big(\begin{matrix}y_0 & \cdots & y_{T-1}\end{matrix}\big)$ to the subset $\Psi(Y)$ of $\Re^{n\times T}$ defined by 
\begin{equation}\label{eq:Psi(Y)}
	\Psi(Y)=\argmin_{Z\in \Re^{n\times T}} F(Z). 
\end{equation}
%for all $Y\in \Re^{n_y\times T}$. 
By assuming that the pair $(A,C)$ is observable, it can be checked that $F$ is coercive, \textit{i.e.}, it satisfies $\lim_{\left\|Z\right\|\rightarrow +\infty} F(Z)=+\infty$ for any norm $\left\|\cdot\right\|$ on $\Re^{n\times T}$. It follows that the estimator $\Psi$ expressed in \eqref{eq:Psi(Y)} is well-defined in the sense that the underlying optimization problem in  \eqref{eq:Psi(Y)} admits a solution \cite{rockafellar_convex_1970}. Note however that the minimizer need not be unique. 
Moreover, since the objective function $F$ is convex, the elements of the so-defined state estimator  $\Psi(Y)$ can be  determined efficiently for a given $Y$. Many numerical solvers can be used for this purpose, see e.g. \cite{grant_cvx_2017,aps_mosek_2001,sturm_using_1999} for the computational aspects.   

%%%%%%%%%%%%%%%%%%%%%% 

The rest of the paper will focus on assessing the resilience properties of the estimator \eqref{eq:Psi(Y)}. For this purpose we need some preliminary technical results.  

\subsection{Preliminaries}\label{subsec:Preliminaries}

To begin with the analysis,  we introduce some useful technical tools, the first of which is the  class of $\mathcal{K}_\infty$ functions (see, e.g., \cite{Kellett14}). This class of functions will be used to measure the increasing rate of the estimation error.
%%% 
\begin{definition}[class-$\mathcal{K_\infty}$ functions]
A function $\alpha:\Re_{\geq 0}\rightarrow\Re_{\geq 0}$ is said to be of class-$\mathcal{K_\infty}$ if it is continuous, zero at zero, strictly increasing and satisfies $\lim_{s\rightarrow +\infty }\alpha(s)=+\infty$.
\end{definition}
Using this definition we can state a technical lemma which will play an important role in the analysis. 
%%%%%%%%%%%%%%%%%%%%%%%%%%
\begin{lem}\label{eq:lem:minimum-value}
Let $G:\Re^{n\times m}\rightarrow\Re_{\geq 0}$ be a nonnegative continuous function satisfying the following properties:
\begin{itemize}%[leftmargin=*]
	\item Positive definiteness: $G(S)=0$ if and only if $S=0$
	\item Relaxed homogeneity: There exists a $\mathcal{K}_\infty$ function $\sigma$ such that $G(S)\geq \sigma(\frac{1}{\lambda})G(\lambda S)$ for all $\lambda\in \Re_{>0}$.
\end{itemize}
Then for any norm $\left\|\cdot\right\|$ on $\Re^{n\times m}$,  there exists $d>0$  such that for all $S\in\Re^{n\times m}$, $G(S)\geq d \sigma(\left\|S\right\|)$. 
\end{lem}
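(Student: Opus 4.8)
The plan is to exploit the compactness of the unit sphere together with a well-chosen instantiation of the relaxed homogeneity inequality. First I would introduce the unit sphere $\mathbb{S}=\left\{S\in\Re^{n\times m}:\left\|S\right\|=1\right\}$ associated with the given norm. Since $\Re^{n\times m}$ is finite-dimensional, $\mathbb{S}$ is compact, and because $G$ is continuous it attains its minimum over $\mathbb{S}$; call this minimal value $d\triangleq\min_{S\in\mathbb{S}}G(S)$. The positive-definiteness assumption then forces $d>0$: indeed $G$ vanishes only at the origin and $0\notin\mathbb{S}$, so $G(S)>0$ for every $S\in\mathbb{S}$, whence the minimum $d$ of a positive continuous function over the compact set $\mathbb{S}$ is strictly positive.

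The crux of the argument is to relate an arbitrary nonzero $S$ to its normalization $S/\left\|S\right\|\in\mathbb{S}$ through the relaxed homogeneity property. For $S\neq 0$, I would apply the hypothesis $G(S)\geq\sigma(1/\lambda)G(\lambda S)$ with the specific choice $\lambda=1/\left\|S\right\|$. With this choice one has $\sigma(1/\lambda)=\sigma(\left\|S\right\|)$ and $\lambda S=S/\left\|S\right\|$, so the inequality becomes
\begin{equation*}
G(S)\geq\sigma(\left\|S\right\|)\,G\!\left(\frac{S}{\left\|S\right\|}\right).
\end{equation*}
Since $S/\left\|S\right\|$ lies on $\mathbb{S}$, we have $G(S/\left\|S\right\|)\geq d$, and combining this with the previous display yields $G(S)\geq d\,\sigma(\left\|S\right\|)$, which is exactly the desired bound for every nonzero $S$.

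It remains to dispatch the degenerate case $S=0$. Here $G(0)=0$ by positive-definiteness, while $\sigma(\left\|0\right\|)=\sigma(0)=0$ since $\sigma$ is of class-$\mathcal{K}_\infty$ and hence zero at zero; thus the claimed inequality reads $0\geq 0$ and holds trivially. This exhausts all $S\in\Re^{n\times m}$ and establishes the lemma with the constant $d$ produced above.

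I expect the only delicate point to be the correct instantiation of the scaling parameter in the relaxed homogeneity property: choosing $\lambda=1/\left\|S\right\|$ (rather than $\lambda=\left\|S\right\|$) is precisely what converts the hypothesis into a \emph{lower} bound on $G(S)$ in terms of its value on the unit sphere, whereas the opposite choice would yield only an unhelpful upper bound. Everything else reduces to the standard extreme-value argument on the compact sphere and the vanishing of $\sigma$ at the origin.
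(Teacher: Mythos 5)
Your proof is correct and follows essentially the same route as the paper: extreme-value argument on the compact unit sphere to obtain $d>0$, then the relaxed homogeneity inequality with $\lambda=1/\left\|S\right\|$ to reduce to the sphere, with the case $S=0$ handled trivially. You are in fact slightly more explicit than the paper about why $d>0$ and about the instantiation of $\lambda$, but there is no substantive difference.
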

%%%%%%%%%%%%%
\begin{proof}
We start by observing that the unit hypersphere $\mathcal{D}=\left\{S\in \Re^{n\times m}: \left\|S\right\|=1\right\}$ is a compact set in the topology induced by the norm $\left\|\cdot\right\|$.  By the extreme value theorem, $G$ being continuous, admits necessarily a minimum value  on $\mathcal{D}$, i.e., there is $S^\star\in \mathcal{D}$ such that $G(S)\geq d\triangleq G(S^\star)>0$ for all $S\in \mathcal{D}$. For any nonzero $S\in \Re^{n\times m}$, $\dfrac{S}{\left\|S\right\|}\in \mathcal{D}$ so that  $G(\dfrac{S}{\left\|S\right\|})\geq d$. On the other hand, by the relaxed homogeneity of $G$, 
$$G(S)\geq \sigma(\left\|S\right\|) G(\dfrac{S}{\left\|S\right\|})\geq d \sigma(\left\|S\right\|).$$ Moreover, this inequality holds for $S=0$. It therefore holds true for any $S\in \Re^{n\times m}$. 
\end{proof}

For future uses in the paper, consider now the function $H:\Re^{n\times T}\rightarrow\Re_{\geq 0}$ defined by
\begin{equation}\label{eq:Fprime}
	H(Z)= \dfrac{\lambda}{2}\sum_{t\in\mathcal{T}'}\lVert z_{t+1}-Az_t\rVert^2_2+\sum_{t\in\mathcal{T}}\left\|Cz_t\right\|_1 \end{equation}
	
Note the resemblance between $F(Z)$ and $H(Z)$. They only differ by the absence of $y_t$ in the second term of $H$ and the factor of the first term which is $\lambda$ in the first case and $\lambda/2$ in the second.	
	
%%%%%%%%%%%%%%%%%%%
\begin{lem}[Lower Bound on $H$]\label{lem:lb}
Let  $\lVert\cdot\rVert$ be a norm on $\Re^{n\times T}$.  Consider the function $H$ defined in \eqref{eq:Fprime} under the assumption that $(A,C)$ is observable. Then 
\begin{equation}
H(Z)\geq Dq(\lVert Z\rVert) \quad \forall Z\in \Re^{n\times T}
\end{equation}
where $q:\Re_{\geq0}\rightarrow\Re_{\geq 0}$ is the function defined by 
\begin{equation}\label{eq:def_q}
\forall \alpha\in\Re_{\geq0},\:q(\alpha)=\min(\alpha,\alpha^2)
\end{equation}
and
\begin{equation}\label{eq:def_d}
D=\min_{\lVert Z \rVert=1}H(Z)>0.
\end{equation}
\end{lem}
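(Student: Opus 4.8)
The plan is to apply Lemma~\ref{eq:lem:minimum-value} to the function $H$, so the work reduces to verifying that $H$ satisfies the two hypotheses of that lemma with $\sigma = q$. First I would check \textbf{positive definiteness}. Clearly $H(Z) \geq 0$, and $H(0) = 0$. For the converse, suppose $H(Z) = 0$. Since both terms of $H$ are nonnegative, each must vanish: $z_{t+1} = A z_t$ for all $t \in \mathcal{T}'$, and $C z_t = 0$ for all $t \in \mathcal{T}$. The first set of relations forces $z_t = A^t z_0$, and then $C z_t = C A^t z_0 = 0$ for $t = 0, \ldots, T-1$. Stacking these gives $\mathcal{O} z_0 = 0$ where $\mathcal{O}$ is the observability matrix (or its truncation to $T$ block-rows). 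Here is where observability of $(A,C)$ enters: assuming $T \geq n$, the matrix $\mathcal{O}$ has full column rank, so $z_0 = 0$, whence $z_t = 0$ for all $t$, i.e. $Z = 0$. This establishes the first hypothesis.

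Next I would verify \textbf{relaxed homogeneity}, namely that $H(Z) \geq \sigma(1/\mu) H(\mu Z)$ for all $\mu > 0$, with $\sigma = q$. The key observation is that $H$ is a sum of a quadratic part $H_2(Z) = \tfrac{\lambda}{2}\sum_{t}\lVert z_{t+1} - A z_t\rVert_2^2$, which is homogeneous of degree $2$, and a part $H_1(Z) = \sum_t \lVert C z_t\rVert_1$, which is homogeneous of degree $1$. Thus $H(\mu Z) = \mu^2 H_2(Z) + \mu H_1(Z)$. I would then show $q(1/\mu)\, H(\mu Z) \leq H(Z)$ by cases. Recall $q(\alpha) = \min(\alpha, \alpha^2)$. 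When $\mu \geq 1$ we have $1/\mu \leq 1$, so $q(1/\mu) = 1/\mu^2 \leq 1/\mu$; then $q(1/\mu) H(\mu Z) = \mu^{-2}(\mu^2 H_2 + \mu H_1) = H_2 + \mu^{-1} H_1 \leq H_2 + H_1 = H(Z)$, using $\mu^{-1} \leq 1$. When $0 < \mu \leq 1$ we have $q(1/\mu) = 1/\mu$, so $q(1/\mu) H(\mu Z) = \mu^{-1}(\mu^2 H_2 + \mu H_1) = \mu H_2 + H_1 \leq H_2 + H_1 = H(Z)$, using $\mu \leq 1$. In both cases the inequality holds, confirming relaxed homogeneity with $\sigma = q$. (I should also check $q$ is genuinely a $\mathcal{K}_\infty$ function: it is continuous, zero at zero, strictly increasing, and tends to $+\infty$ — all immediate from $q(\alpha) = \min(\alpha, \alpha^2)$.)

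With both hypotheses verified, Lemma~\ref{eq:lem:minimum-value} applied with $G = H$ and $\sigma = q$ yields a constant $d > 0$ such that $H(Z) \geq d\, q(\lVert Z\rVert)$ for all $Z$, where $d = \min_{\lVert Z\rVert = 1} H(Z) = D$; this $d$ is exactly the minimizer-value identified in the proof of Lemma~\ref{eq:lem:minimum-value}, matching the definition \eqref{eq:def_d} of $D$. Finally I would confirm $D > 0$: since $H$ is continuous and the unit sphere is compact, the minimum $D$ is attained at some $Z^\star$ with $\lVert Z^\star\rVert = 1 \neq 0$, and by positive definiteness $H(Z^\star) > 0$, so $D > 0$. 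The \textbf{main obstacle} I anticipate is the observability step in the positive-definiteness argument: one must be careful that the data spans enough time steps ($T \geq n$) for the truncated observability matrix to have trivial kernel, so that the two vanishing conditions together actually force $Z = 0$. Everything else is a routine degree-counting argument feeding directly into the already-proven Lemma~\ref{eq:lem:minimum-value}.
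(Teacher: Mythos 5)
Your proposal is correct and follows essentially the same route as the paper's own proof: both reduce the lemma to Lemma~\ref{eq:lem:minimum-value} by verifying positive definiteness (via observability, forcing $z_t = A^t z_0$ and $\mathcal{O}z_0 = 0$) and relaxed homogeneity with $\sigma = q$, which the paper asserts without detail and you spell out via the degree-$2$/degree-$1$ decomposition and the two-case analysis on $\mu$. Your remark that one needs $T \geq n$ for the vanishing conditions $CA^t z_0 = 0$, $t = 0,\ldots,T-1$, to cover the full observability matrix is a valid point that the paper leaves implicit; otherwise the arguments coincide.
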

%%%%%%%%%%%%%
\begin{proof}
The idea of the proof  is to check that $H$ satisfies the conditions  of Lemma \ref{eq:lem:minimum-value} and then apply it to conclude. First, note that  continuity and nonnegativity  of $H$ are obvious. As to the relaxed homogeneity property, it can be checked straightforwardly that it holds with $\sigma=q$. Finally, setting $H(Z)=0$ implies that $z_{t+1}=Az_t$ and $Cz_t=0$ for all $t=0,\ldots,T-1$. It immediately  follows that $CA^tz_0=0$ and so, $\mathcal{O}z_0=0$ where $\mathcal{O}=\begin{pmatrix} C^{\top} & \cdots & (CA^{n-1})^{\top}\end{pmatrix}^{\top}$ is the observability matrix of the system. By the observability assumption, we get that $z_0=0$ and consequently, that $Z=0$. Therefore $H$ is positive-definite. The statement of the lemma now follows by applying Lemma \ref{eq:lem:minimum-value}. 
\end{proof}
%%%%%%%%%%%%%%

\noindent To proceed further, let us introduce a few notations. We use the notation $\mathcal{I}=\left\{1,\ldots,n_y\right\}$ to denote a label set for the sensors described by the observation equation in \eqref{eq:sys} and $\mathcal{T}=\left\{0,\ldots,T-1\right\}$ to the set of time indexes. For $i\in \mathcal{I}$, $c_i^\top$ denotes the $i$-th row of the observation matrix $C$. \\
The next definition introduces a parameter to gauge the resilience properties of an estimator of the form defined in \eqref{eq:Psi(Y)}. 
%%
%%%%%%%%%%%%%%%%%%%%%%%%%%%%%%%%%%%%%
\begin{definition}[$r$-Resilience index $p_r$]
Let   $r$ be a nonnegative integer. Assume that the system  $\Sigma$ in \eqref{eq:sys} is  observable.  We define the $r$-Resilience index of the estimator $\Psi$ in \eqref{eq:Psi(Y)} (when applied to $\Sigma$) as the real number $p_r$ given by 
\begin{equation}\label{eq:def_pr}
p_r=\sup_{\substack{Z\neq 0\\ Z\in\Re^{n\times T}}}\sup_{\substack{\Lambda_r\subset \mathcal{I}\times \mathcal{T}\\|\Lambda_r|=r}}\dfrac{\sum_{(i,t)\in \Lambda_r} \left|c_i^\top z_t\right|}{H(Z)}
\end{equation}
where $H$ is as defined in~\eqref{eq:Fprime}. The supremum is taken here over all nonzero $Z$ in $\Re^{n\times T}$ and over all subsets  $\Lambda_r$ of $\mathcal{I}\times \mathcal{T}$ with cardinality $r$. 
\end{definition}
%%%
\noindent The index $p_r$ can be interpreted as a quantitative measure of the observability of the system $\Sigma$. The observability is needed here to ensure that the denominator $H(Z)$ of \eqref{eq:def_pr} is different from zero whenever $Z\neq 0$ (see the positive definiteness proof of $H$ in the proof Lemma \ref{lem:lb} above).  Furthermore, it should be remarked that $\sum_{(i,t)\in \Lambda_r} \left|c_i^\top z_t\right|\leq H(Z)$ for any $\Lambda_r\subset \mathcal{I}\times \mathcal{T}$,  which implies that the defining suprema of  $p_r$ are well-defined. 

What the $r$-Resilience parameter assesses is how much the estimator can handle data corruption as it represents the worst ratio between the weight of $r$ corrupted estimates (which take any value and be potentially placed anywhere in time) and the weight of the whole estimated trajectory. As a result, the lower $p_r$ is, the more resilient the estimator is expected to be. The next section gives more background to the introduction of $p_r$ and which role it exactly plays in the resilience analysis of the estimator.

From a computational viewpoint  we observe that the parameter $p_r$ is hard to compute in general. In effect, obtaining $p_r$ numerically would require solving a nonconvex and combinatorial optimization problem. This is indeed a common characteristic of the concepts which are usually used to assess resilience; for example the popular  Restricted Isometry Property (RIP) constant ~\cite{candes_restricted_2008} is comparatively as hard to evaluate. Nevertheless, if we restrict attention to  estimation problems where the process noise $\left\{w_t\right\}$ would be identically equal to zero, then by adding in \eqref{eq:def_pr} the additional constraint that $z_{t+1}=Az_t$, $p_r$ can be exactly computed  using the method in \cite{sharon_minimum_2009} or more cheaply overestimated using the one in \cite{bako_class_2017}.

\subsection{Characterization of the resilience property}\label{subsec:Resilience}

The main result of this paper  consists in the characterization of the resilience property of the state estimator \eqref{eq:Psi(Y)}. 
More specifically, our result states that  the estimation error, \textit{i.e.}, the difference between the real state trajectory and the estimated one, is upper bounded by a bound  which does not depend on the amplitude of the outliers contained in $\left\{f_t\right\}$ provided that the number of such outliers is below some threshold. 

Before stating the main theorem, let us introduce a last notation to be used in the analysis. 
Let $\varepsilon\geq 0$ be a given number. For any admissible  sequence $\left\{f_t\right\}_{t\in \mathcal{T}}$ in \eqref{eq:sys}, we can  split the index set $\mathcal{I}\times \mathcal{T}$ into two disjoint label sets,  
\begin{equation}\label{eq:def_teps}
	\mathcal{J}_\varepsilon=\left\{(i,t)\in \mathcal{I}\times \mathcal{T}: |f_{it}|\leq \varepsilon\right\},
\end{equation}
indexing those\footnote{$f_{it}$ denotes the $i$-th entry of the vector $f_t$.} $f_{it}$ which are bounded by $\varepsilon$ and  $\mathcal{J}_\varepsilon^c=\left\{(i,t)\in \mathcal{I}\times \mathcal{T}: |f_{it}|> \varepsilon\right\}$ indexing those $f_{it}$ which are possibly unbounded. It is important to keep in mind that $\varepsilon$ is just a parameter for decomposing the noise sequence in two parts in view of the analysis (and not a bound on $f_{it}$). 
 The particular situation where $\varepsilon=0$ reflects the approach where one would view any nonzero $f_{it}$ as an outlier.
\begin{theorem}[Upper bound on the estimation error]\label{th:lb}
Consider the system $\Sigma$ defined by~\eqref{eq:sys} with output measurement $Y$ and consider the estimator \eqref{eq:Psi(Y)}. Let $\varepsilon\in\Re_{\geq 0}$   and $r=|\mathcal{J}_\varepsilon^c|$. If $\Sigma$ is observable and $p_r<1/2$, then for all $\hat{X}=\begin{pmatrix}
\hat{x}_0 &\cdots &\hat{x}_{T-1}
\end{pmatrix}\in \Psi(Y)$, 
\begin{equation}\label{eq:th_ub}
\lVert E\rVert \leq h\left(\dfrac{2\beta_\Sigma(\varepsilon)}{D(1-2p_r)}\right)
\end{equation} 
where $E=\begin{pmatrix}
\hat{x}_0-x_0 &\cdots &\hat{x}_{T-1}-x_{T-1}
\end{pmatrix}$, $\lVert \cdot \rVert$ is any given norm on $\Re^{n\times T}$, $\beta_\Sigma(\varepsilon)$ is defined by
\begin{equation}
\beta_\Sigma(\varepsilon)=\lambda\sum_{t\in\mathcal{T}'}\lVert w_t\rVert_2^2+\sum_{(i,t)\in \mathcal{J}_\varepsilon}|f_{it}|,
\end{equation}
 the function $h:\Re_{\geq 0}\rightarrow\Re_{\geq 0}$ is defined by %such that for all $\lambda$ in $\Re_{>0}$,
\begin{equation}\label{eq:def_h}
\forall \alpha\in\Re_{\geq 0},\: h(\alpha)=\max\big(\alpha,\sqrt{\alpha}\big)
\end{equation}
and $D$ is given as in~\eqref{eq:def_d} from the norm $\lVert \cdot \rVert$.
\end{theorem}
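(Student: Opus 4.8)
The plan is to exploit the optimality of any minimizer $\hat X\in\Psi(Y)$ by comparing $F(\hat X)$ against the value of $F$ at the \emph{true} trajectory $X$ with columns $x_t$, and then to convert the resulting inequality into a bound on $H(E)$, which Lemma~\ref{lem:lb} turns into a bound on $\lVert E\rVert$. First I would evaluate $F(X)$: along the true dynamics $x_{t+1}-Ax_t=w_t$ and $y_t-Cx_t=f_t$, hence $F(X)=\lambda\sum_{t\in\mathcal{T}'}\lVert w_t\rVert_2^2+\sum_{(i,t)}|f_{it}|$. Writing $e_t=\hat x_t-x_t$ and $\delta_t=e_{t+1}-Ae_t$, the residuals at $\hat X$ become $\hat x_{t+1}-A\hat x_t=w_t+\delta_t$ and $y_t-C\hat x_t=f_t-Ce_t$, so $F(\hat X)=\lambda\sum_{t\in\mathcal{T}'}\lVert w_t+\delta_t\rVert_2^2+\sum_{(i,t)}|f_{it}-c_i^\top e_t|$.

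The next step is to lower-bound $F(\hat X)$ in terms of $H(E)$ and $\beta_\Sigma(\varepsilon)$. For the quadratic part I would use the completion-of-the-square estimate $\lVert w_t+\delta_t\rVert_2^2\geq\tfrac12\lVert\delta_t\rVert_2^2-\lVert w_t\rVert_2^2$, which is exactly what produces the factor $\lambda/2$ in $H$. For the $\ell_1$ part I would apply the reverse triangle inequality with a sign chosen according to the partition: $|f_{it}-c_i^\top e_t|\geq|c_i^\top e_t|-|f_{it}|$ on $\mathcal{J}_\varepsilon$ (keeping the error term) and $|f_{it}-c_i^\top e_t|\geq|f_{it}|-|c_i^\top e_t|$ on $\mathcal{J}_\varepsilon^c$ (exposing the large outlier term so it can cancel). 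Substituting into $F(\hat X)\leq F(X)$ and cancelling the common $\lambda\sum\lVert w_t\rVert_2^2$ and $\sum_{\mathcal{J}_\varepsilon^c}|f_{it}|$ should leave, after rearranging,
\[
\frac{\lambda}{2}\sum_{t\in\mathcal{T}'}\lVert\delta_t\rVert_2^2+\sum_{\mathcal{J}_\varepsilon}|c_i^\top e_t|-\sum_{\mathcal{J}_\varepsilon^c}|c_i^\top e_t|\ \leq\ 2\beta_\Sigma(\varepsilon).
\]
The key algebraic observation is that the left-hand side equals $H(E)-2\sum_{\mathcal{J}_\varepsilon^c}|c_i^\top e_t|$, since $H(E)=\tfrac{\lambda}{2}\sum\lVert\delta_t\rVert_2^2+\sum_{(i,t)}|c_i^\top e_t|$.

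At this point the resilience index enters. Since $|\mathcal{J}_\varepsilon^c|=r$, the definition~\eqref{eq:def_pr} of $p_r$ gives $\sum_{(i,t)\in\mathcal{J}_\varepsilon^c}|c_i^\top e_t|\leq p_r\,H(E)$, so the displayed inequality yields $(1-2p_r)H(E)\leq 2\beta_\Sigma(\varepsilon)$. As $p_r<1/2$, this is $H(E)\leq 2\beta_\Sigma(\varepsilon)/(1-2p_r)$. Finally I would invoke Lemma~\ref{lem:lb} to write $D\,q(\lVert E\rVert)\leq H(E)$ and then invert $q$: since $q(\alpha)=\min(\alpha,\alpha^2)$ is a continuous strictly increasing bijection of $\Re_{\geq0}$ whose inverse is precisely $h(\alpha)=\max(\alpha,\sqrt\alpha)$, applying the increasing map $h$ to $q(\lVert E\rVert)\leq 2\beta_\Sigma(\varepsilon)/(D(1-2p_r))$ delivers the claimed bound~\eqref{eq:th_ub}.

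I expect the main obstacle to be the bookkeeping in the second step: arranging every cross term, cancellation, and sign so that the residual collapses exactly to $H(E)-2\sum_{\mathcal{J}_\varepsilon^c}|c_i^\top e_t|$ on the left and $2\beta_\Sigma(\varepsilon)$ on the right. The conceptual crux is recognizing that this is the right quantity to isolate, so that the single estimate $\sum_{\mathcal{J}_\varepsilon^c}|c_i^\top e_t|\leq p_r H(E)$ from the definition of $p_r$ closes the argument; the remaining ingredients (the quadratic estimate, the reverse triangle inequalities, and the inversion of $q$ through $h$) are routine once this structure is in place.
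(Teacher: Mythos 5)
Your proposal is correct and follows essentially the same route as the paper's own proof: the optimality comparison $F(\hat X)\leq F(X)$, the quadratic estimate $\lVert w_t+\delta_t\rVert_2^2\geq\tfrac{1}{2}\lVert \delta_t\rVert_2^2-\lVert w_t\rVert_2^2$, the sign-split reverse triangle inequalities over $\mathcal{J}_\varepsilon$ and $\mathcal{J}_\varepsilon^c$ collapsing to $H(E)-2\sum_{\mathcal{J}_\varepsilon^c}|c_i^\top e_t|\leq 2\beta_\Sigma(\varepsilon)$, the bound via $p_r$, and the inversion of $q$ through $h$ after Lemma~\ref{lem:lb} all match the published argument, including the factor-of-two bookkeeping. (Incidentally, your residual $y_t-C\hat x_t=f_t-Ce_t$ fixes an immaterial sign typo in the paper, which writes $f_t+Ce_t$.)
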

%%%%%%%
\begin{proof} By definition~\eqref{eq:Psi(Y)} of the  estimator $\Psi$, it holds that for all $\hat{X}\in \Psi(Y)$,  $F(\hat{X})\leq F(X)$, that is, 
\begin{equation}
	\label{eq:deb_proof}
	\begin{aligned}
	\lambda\sum_{t\in\mathcal{T}'}&\lVert \hat{x}_{t+1}-A\hat{x}_t\rVert^2_2+\sum_{t\in\mathcal{T}} \left\|y_t-C\hat{x}_t\right\|_1\\ 
	&\leq \lambda\sum_{t\in\mathcal{T}'}\lVert x_{t+1}-Ax_t\rVert^2_2+\sum_{t\in\mathcal{T}}\left\|y_t-Cx_t\right\|_1\\
	&= \lambda\sum_{t\in\mathcal{T}'}\lVert w_t\rVert^2_2+\sum_{t\in\mathcal{T}}\left\|f_t\right\|_1. 
	\end{aligned}
\end{equation}
Next, we derive a lower bound on the left hand side of \eqref{eq:deb_proof}.
For every $t$ in $\mathcal{T}$, let $e_t=\hat{x}_t-{x}_t$. Then
\begin{equation}\label{eq:INEQ2}
	\begin{aligned}
	\lVert  \hat{x}_{t+1}-A\hat{x}_{t}\rVert_2^2& =\lVert \hat{x}_{t+1}-x_{t+1}-A(\hat{x}_t-{x}_t)+w_t\rVert_2^2   \\
	& \geq \lVert e_{t+1}-Ae_t+w_t\rVert_2^2\\
	&\geq\dfrac{1}{2}\lVert e_{t+1}-Ae_t\rVert_2^2-\lVert w_t\rVert_2^2. %\text{ because of \eqref{eq:tri_gen}}
	\end{aligned} 
\end{equation}
The last inequality uses the identity (see Lemma~\ref{lem:n2} in Appendix~\ref{app:a} for a proof)
\begin{equation}\label{eq:tri_gen}
\lVert z_1-z_2\rVert_2^2\geq \dfrac{1}{2}\lVert z_1 \rVert_2^2-\lVert z_2\rVert_2^2 \: \: \forall (z_1,z_2) \in \Re^n\times \Re^n.
\end{equation}
 Similarly, we can write
\begin{align*}
\left\|y_t-C\hat{x}_t\right\|_1&=\left\|y_t-Cx_t -C(\hat{x}_t-x_t)\right\|_1\\
&=\left\|f_t+Ce_t\right\|_1%\\
%& = \sum_{i\in \mathcal{I}} \left|f_{it}+c_i^\top e_t\right|
\end{align*}
As a consequence, the second term of the left-hand-side of \eqref{eq:deb_proof} is expressible as 
$$\sum_{t\in\mathcal{T}} \left\|y_t-C\hat{x}_t\right\|_1=\sum_{(i,t)\in \mathcal{I}\times \mathcal{T}}\left|f_{it}+c_i^\top e_t\right|. $$

Now, depending on if the couple $(i,t)$ belongs to $\mathcal{J}_\varepsilon$ or not, we apply the triangle inequality property of the absolute value differently, the two cases being
$$
\begin{aligned}
\forall (i,t)\in \mathcal{J}_\varepsilon,  \quad & \left|f_{it}+c_i^\top e_t\right|\geq  |c_i^\top e_t|-|f_{it}|\\
\forall (i,t)\in \mathcal{J}_\varepsilon^c,\quad  &\left|f_{it}+c_i^\top e_t\right|\geq |f_{it}|-|c_i^\top e_t|
\end{aligned}
$$
It follows that 
$$
\begin{aligned}
	\sum_{t\in\mathcal{T}} \left\|y_t-C\hat{x}_t\right\|_1 \geq& \sum_{(i,t)\in \mathcal{J}_\varepsilon} |c_i^\top e_t| -\sum_{(i,t)\in \mathcal{J}_\varepsilon^c} |c_i^\top e_t|\\
	& -\sum_{(i,t)\in \mathcal{J}_\varepsilon} |f_{it}| +\sum_{(i,t)\in \mathcal{J}_\varepsilon^c} |f_{it}|. 
\end{aligned}
$$
Combining this with \eqref{eq:deb_proof} and \eqref{eq:INEQ2} and re-arranging, yields
%As $x_{t+1}-Ax_t=w_t$ and $y_t-Cx_t=f_t$, \eqref{eq:deb_proof} yields 
\begin{multline}\label{eq:INEQ3}
\dfrac{\lambda}{2}\sum_{t\in\mathcal{T}'} \lVert e_{t+1}-Ae_t\rVert_2^2+\sum_{(i,t)\in \mathcal{J}_\varepsilon}|c_i^\top e_t|-\sum_{(i,t)\in \mathcal{J}_\varepsilon^c}|c_i^\top e_t| \\
\leq 2\Big(\lambda \sum_{t\in\mathcal{T}'}\lVert w_t\rVert_2^2+\sum_{(i,t)\in \mathcal{J}_\varepsilon}|f_{it}|\Big)
\end{multline}
On the right hand side of \eqref{eq:INEQ3}, we recognize $2\beta_\Sigma(\varepsilon)$ as in~\eqref{eq:th_ub}. As to the term on the left hand side, it is equal to $H(E)-2\sum_{(i,t)\in\mathcal{J}_\varepsilon^c}|c_i^\top e_t|$. 

Independently, $|\mathcal{J}_\varepsilon^c|=r$ so by definition~\eqref{eq:def_pr} of the index $p_r$,
\begin{equation}\label{eq:app_pr}
\sum_{(i,t)\in\mathcal{J}_\varepsilon^c}|c_i^\top e_t|\leq p_rH(E)
\end{equation}
Consequently, it follows from \eqref{eq:INEQ3} and \eqref{eq:app_pr} that 
$$(1-2p_r)H(E)\leq H(E)-2\sum_{(i,t)\in\mathcal{J}_\varepsilon^c}|c_i^\top e_t|\leq  2\beta_{\Sigma}(\varepsilon). $$
Since $p_r$ is assumed to be smaller than $1/2$,   $1-2p_r>0$. Therefore, we can write
\begin{equation}\label{eq:proof_i}
 H(E)\leq \dfrac{2\beta_\Sigma(\varepsilon)}{1-2p_r}
\end{equation}
Thanks to  Lemma~\ref{lem:lb}, we have $H(E)\geq Dq(\lVert E\rVert)$ for any given norm $\lVert\cdot\rVert$ on $\Re^{n\times T}$. This implies that 
$$ q(\lVert E\rVert)\leq \dfrac{2\beta_\Sigma(\varepsilon)}{D(1-2p_r)}$$
Now observe that the function $h$  defined in~\eqref{eq:def_h}, is the inverse function of $q$, meaning that for every $\lambda\in\Re_{\geq 0}$, $h(q(\lambda))=\lambda$. Moreover, $h$ is an increasing function. Applying $h$ to both members of the previous inequality gives the desired result.
\end{proof}

The resilience property of the estimator \eqref{eq:Psi(Y)} lies here in the fact that, under the conditions of Theorem \ref{th:lb}, the bound in \eqref{eq:th_ub} on the estimation error does not depend on the magnitudes of the extreme values of the noise sequence $\left\{f_{it}\right\}_{(i,t)\in \mathcal{I}\times \mathcal{T}}$. Considering in particular the function $\beta_\Sigma(\varepsilon)$, we remark that it can be overestimated as follows  
\begin{equation}
\beta_\Sigma(\varepsilon)\leq \lambda\sum_{t\in\mathcal{T}'}\lVert w_t\rVert_2^2+|\mathcal{J}_\varepsilon|\varepsilon. 
\end{equation}
We recognize two terms in the upper bound of $\beta_\Sigma(\varepsilon)$: 
(i) the first one is a sum which simply represents the uncertainty brought by the dense noise $w_t$ over the whole state trajectory and which does not depend on $\varepsilon$; 
(ii) the second one is a bound on the sum of those instances of $f_{it}$ whose magnitude is smaller that $\varepsilon$. \\ %multiplied by $\varepsilon$.
Because $\beta_\Sigma$ is a function of $\varepsilon$, the bound in \eqref{eq:th_ub}  represents indeed a family of bounds  parameterized by $\varepsilon$. Since $\varepsilon$ is a mere analysis device, a question would be how to select it for the analysis to achieve the smallest bound. Such values, say $\varepsilon^\star$, satisfy  
$$\varepsilon^\star\in \argmin_{\varepsilon\geq 0} \left\{h\Big(\dfrac{2\beta_\Sigma(\varepsilon)}{D(1-2p_r)}\Big):  \: r=|\mathcal{J}_\varepsilon^c|, \: p_r<1/2\right\}.  $$

Another interesting point is that the inequality stated by Theorem~\ref{th:lb} holds for any norm on $\Re^{n\times T}$. Note though that the value of the bound depends (through the parameter $D$ defined in \eqref{eq:def_d}) on the specific norm used to measure the estimation error.  Moreover, different choices of the performance-measuring norm will result in different geometric forms for the uncertain set, that is, the ball (in the chosen norm) centered at the true state with radius equal to the upper bound displayed in \eqref{eq:th_ub}.

We also observe that  the smaller the parameter $p_r$ is, the tighter the error bound will be, which suggests that the estimator is more resilient when $p_r$ is lower. A similar reasoning applies to the number $D$ which is desired to be large here.  
These two parameters (i.e., $p_r$ and $D$) reflect properties of the system whose state is being estimated. They can be interpreted, to some extent, as measures of the degree of observability of the system. In conclusion, the estimator inherits partially its resilience property from characteristics of the system being observed. This is consistent with the well-known fact that the more observable a system is, the more robustly its state can be estimated from output measurements.

Finally, an interesting property of the estimator can be stated in the absence of dense noise:

\begin{coro}
Consider the system $\Sigma$  defined by~\eqref{eq:sys} and let $r=|\mathcal{J}_0^c|$ (which means that we consider every nonzero occurrence of $f_{it}$ as an outlier). If $p_r<1/2$, and if  $w_t=0$ for all $t$, then the estimator defined by~\eqref{eq:Psi(Y)} retrieves exactly the state trajectory of the system.
\end{coro}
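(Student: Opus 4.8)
The plan is to obtain this as a direct specialization of Theorem~\ref{th:lb} to the case $\varepsilon=0$. First I would note that the hypothesis $p_r<1/2$ already presupposes observability of $\Sigma$, since the resilience index $p_r$ is only defined under that assumption; hence all the standing assumptions of Theorem~\ref{th:lb} are in force with the particular choice $\varepsilon=0$ and $r=|\mathcal{J}_0^c|$. The strategy is therefore simply to track what each term of the error bound becomes once the two sources of residual uncertainty are switched off.

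The key computation is to show that the numerator $\beta_\Sigma(\varepsilon)$ in the error bound collapses to zero. With $\varepsilon=0$, the set $\mathcal{J}_0=\{(i,t): |f_{it}|\leq 0\}$ indexes exactly those entries with $f_{it}=0$, so the sum $\sum_{(i,t)\in\mathcal{J}_0}|f_{it}|$ vanishes term by term. Since moreover $w_t=0$ for all $t$ by hypothesis, the process-noise term $\lambda\sum_{t\in\mathcal{T}'}\lVert w_t\rVert_2^2$ is also zero. Hence $\beta_\Sigma(0)=0$.

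It then remains to feed this into \eqref{eq:th_ub}. Applying Theorem~\ref{th:lb} with $\varepsilon=0$ gives, for every $\hat{X}\in\Psi(Y)$,
\[
\lVert E\rVert \leq h\!\left(\dfrac{2\beta_\Sigma(0)}{D(1-2p_r)}\right)=h(0)=\max\big(0,\sqrt{0}\big)=0.
\]
Since $\lVert\cdot\rVert$ is a norm, $\lVert E\rVert=0$ forces $E=0$, i.e.\ $\hat{x}_t=x_t$ for all $t\in\mathcal{T}$; as this holds for every minimizer, we conclude $\Psi(Y)=\{X\}$ and the true trajectory is recovered exactly. There is essentially no obstacle here beyond verifying that $\beta_\Sigma(0)=0$ and that $h(0)=0$: the corollary is a clean limiting case in which both the dense noise and the sub-threshold output noise disappear, leaving a bound that pins the estimation error to zero.
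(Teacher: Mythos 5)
Your proposal is correct and follows exactly the paper's own route: the paper likewise proves the corollary by specializing Theorem~\ref{th:lb} to $\varepsilon=0$ and observing that $\beta_\Sigma(0)=0$ when $w_t=0$ for all $t$, so that the bound \eqref{eq:th_ub} forces $\lVert E\rVert\leq h(0)=0$. Your version merely spells out the intermediate steps (including the useful remark that $p_r<1/2$ presupposes observability) that the paper leaves implicit.
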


\begin{proof}
This follows directly from the fact that $\beta_\Sigma(0)=0$ in the case where there is no dense noise $w_t$ and $\varepsilon=0$. 
\end{proof}
Therefore, we have the exact recoverability of every state of the system \eqref{eq:sys} by the estimator when there is no process noise. According to our analysis, the number of outliers that can be handled by the estimator in this case can be underestimated by 
\begin{equation}
	\max\big\{r: p_r<1/2\big\}. 
\end{equation}

%%%%%%%%%%%%%
\section{Simulation Results}\label{part:v}

In this section, we present the simulation results of a system desgined as~\eqref{eq:sys} with
\begin{equation*}
	A=\begin{pmatrix}
	-0.11	&-0.34\\
	-0.34	&0.46
	\end{pmatrix},\:C=\begin{pmatrix}
	1.4 & -0.94
	\end{pmatrix}
\end{equation*}
$w_t$ is a gaussian white noise of unit variance. The dense component of $f_t$, which will be called $v_t$ in this section, is a gaussian white noise of signal-to-noise ratio equal to 30dB, while the sparse component of $f_t$, which will be called $s_t$, is a sparse vector whose non-zero elements are randomly selected and given a random value: as a result of this structure, we note $y_{wt}=Cx_t+v_t$ the uncorrupted output of the system. The estimated states were then obtained by directly solving the optimisation problem defined in~\eqref{eq:Psi(Y)} with $\lambda=1/5$ through CVX \cite{grant_cvx_2017}. To give a basis for comparison, we also estimated the state of the system through a Rauch-Tung-Striebel smoother which is an extension of the Kalman filter to offline estimation~\cite{gelb_applied_1974}.

\begin{figure}[t]
\vspace{0.5em}
\centering
\hspace{-0.1em}
\includegraphics[width=0.45\textwidth]{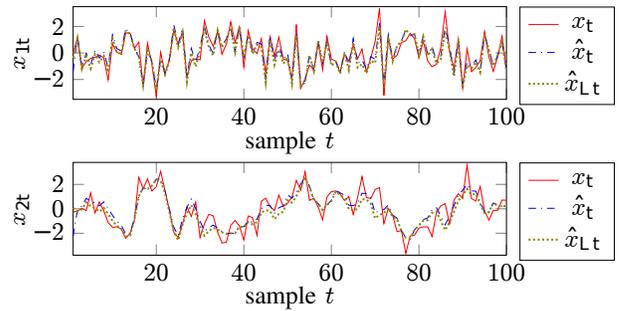}
\caption{State of the system and its estimates (resilient estimator and smoother) in absence of sparse noise}
\label{fig:x_sb}
\end{figure}

\begin{figure}[t]
\centering
\hspace{-0.1em}
\includegraphics[width=0.45\textwidth]{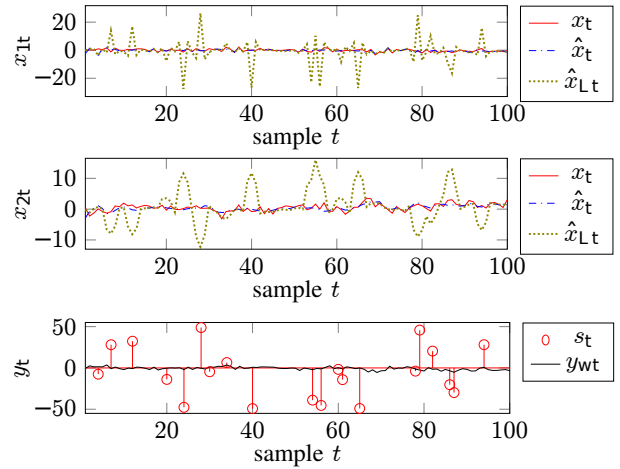}
\caption{State, estimated states (through resilient estimation and smoothing) and output of the system in presence of sparse noise}
\label{fig:xy_abl}
\end{figure}
\unskip

Figure~\ref{fig:x_sb} presents the classic case where there is no sparse noise corrupting the output of the system. This is the scenario handled by classic estimators such as the Kalman Filter or in our case the Rauch-Tung-Striebel smoother. We can however notice that our estimator gives satisfying results, fitting the trajectory of the real state and giving very similar results to the smoother. It is all the more interesting as our estimator does not take into account the statistical properties of the  noises involved in the system, contrary to the smoother which requires a tuning to approach the variance of those noises.

Figure~\ref{fig:xy_abl} now presents the case where twenty corrupted values were added to the output of the system. The smoother tries to compensate the attacks, as it can be noted that the estimate diverges when a corruption occurs, but it is entirely normal given that the Kalman filter theory is designed around noises in the form of white gaussian processes only. Figure~\ref{fig:xy_ab} compares the trajectory of the real state and the estimated state obtained through our resilient estimator. Even in the presence of corrupted measurements of arbitrarily large magnitude, the estimator still manages to efficiently track the trajectory of the real states, showing that its performance are not really degraded in that case.

\unskip
\begin{figure}[h]
\centering
\includegraphics[width=0.45\textwidth]{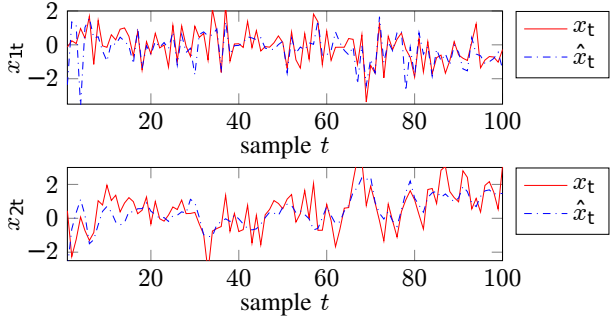}
\caption{State of the system and its estimate (resilient estimator) in presence of sparse noise}
\label{fig:xy_ab}
\end{figure}
\unskip

\section{Conclusion}\label{part:vi}
In this paper, we considered the problem of estimating the state of linear discrete-time systems in the face of uncertainties modeled as process  and measurement noise in the system equations. The measurement noise sequence assumes values of possibly arbitrarily large amplitude which occur intermittently in time.   
For this problem we  proposed an estimator based on the resolution of a convex optimization problem. In particular, we proved a resilience property for the proposed estimator,  that is, the resulting estimation error is bounded by a bound which is independent of the extreme values of the measurement noise provided that the number of occurrences (over time and over the whole set of sensors) of such extreme values is limited. 
Future works will aim at  generalizing the resilient properties to a wider class of estimators and applying the estimation framework to relevant practical cases.

\appendix

\subsection{Additional elements to the proof of Theorem 1}\label{app:a} 

\begin{lem}\label{lem:n2}
Let $G:\Re^{n\times m}\rightarrow\Re_{\geq 0}$ be a convex function satisfying the properties of positive definiteness and relaxed homogeneity (for a given $\mathcal{K}_{\infty}$ function $\sigma$) as both defined in Lemma~\ref{eq:lem:minimum-value}. Then, for all $(S_1,S_2)\in\Re^{n\times m}\times\Re^{n\times m}$,
\begin{equation}
G(S_1-S_2)\geq 2 \sigma(1/2)G(S_1)-G(S_2)
\end{equation}
\end{lem}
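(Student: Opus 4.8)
The plan is to combine the convexity of $G$ with its relaxed homogeneity property; the only subtlety will be the direction in which the latter is applied.

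First I would exhibit $\tfrac{1}{2}S_1$ as the midpoint of the segment joining $S_1-S_2$ and $S_2$, namely $\tfrac{1}{2}S_1=\tfrac{1}{2}(S_1-S_2)+\tfrac{1}{2}S_2$. Convexity of $G$ then gives
\[
G\Big(\tfrac{1}{2}S_1\Big)\leq \tfrac{1}{2}G(S_1-S_2)+\tfrac{1}{2}G(S_2),
\]
which rearranges immediately into $G(S_1-S_2)\geq 2\,G(\tfrac{1}{2}S_1)-G(S_2)$.

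Next I would lower-bound $G(\tfrac{1}{2}S_1)$ in terms of $G(S_1)$ via relaxed homogeneity. The care needed here is to apply the inequality $G(S)\geq \sigma(1/\lambda)G(\lambda S)$ with the choice $S=\tfrac{1}{2}S_1$ and $\lambda=2$ (and \emph{not} $S=S_1$, $\lambda=1/2$, which would point the inequality the wrong way). This yields $G(\tfrac{1}{2}S_1)\geq \sigma(1/2)\,G(S_1)$. Substituting into the previous display produces the claimed bound $G(S_1-S_2)\geq 2\sigma(1/2)G(S_1)-G(S_2)$.

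The main — and essentially only — obstacle is selecting the correct instance of relaxed homogeneity; both the convexity step and the final substitution are then immediate. Note that neither positive definiteness of $G$ nor any structural feature of $\sigma$ beyond $\sigma(1/2)\geq 0$ is actually invoked, so those hypotheses are carried along only because this lemma is stated for the same class of functions as Lemma~\ref{eq:lem:minimum-value}.
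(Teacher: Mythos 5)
Your proof is correct and matches the paper's argument exactly: the same midpoint identity $\tfrac{1}{2}S_1=\tfrac{1}{2}(S_1-S_2)+\tfrac{1}{2}S_2$ fed into convexity, followed by relaxed homogeneity applied as $G(S_1/2)\geq \sigma(1/2)G(S_1)$. Your side remark that positive definiteness of $G$ is never actually invoked is also accurate.
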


\begin{proof}
As $G$ is convex,
\begin{equation}
G\left(\dfrac{1}{2}(S_1-S_2)+\dfrac{S_2}{2}\right)\leq \dfrac{1}{2}G(S_1-S_2)+\dfrac{1}{2}G(S_2)
\end{equation}
which, by multiplying the whole inequality by 2, can be rewritten as
\begin{equation}\label{eq:lem3_pro}
G(S_1-S_2)\geq 2G(S_1/2)-G(S_2)
\end{equation}
Moreover, by assumption, $G$ verifies the relaxed homogeneity property with a $\mathcal{K}_\infty$ function $\sigma$: it entails that
\begin{equation}
\forall S_1\in\Re^{n\times m},\: G(S_1/2)\geq \sigma(1/2)G(S_1)
\end{equation}
which, when injected in~\eqref{eq:lem3_pro}, gives the desired result.
\end{proof}

In the case where $G=\lVert \cdot \rVert_2^2$, as norms are homogeneous, for every $\lambda\in\Re_{>0}$ and $z\in\Re^n$, $G(z)=G(\lambda z)/\lambda^2$. It follows that Lemma~\ref{lem:n2} can be applied to $G$ for $\sigma$ such that $\forall \alpha\in\Re_{\geq 0}$, $\sigma(\alpha)=\alpha^2$, yielding
\begin{equation}
\lVert z_1-z_2\rVert_2^2\geq \dfrac{1}{2}\lVert z_1 \rVert_2^2-\lVert z_2\rVert_2^2 \: \: \forall (z_1,z_2) \in \Re^n\times \Re^n.
\end{equation}

%%%%%%%%%%%%%%%%%%%%
\bibliographystyle{abbrv}
%\bibliography{biblio_cdc19}

\begin{thebibliography}{10}

\bibitem{aps_mosek_2001}
M.~ApS.
\newblock The {MOSEK} optimization toolbox for {MATLAB}.

\bibitem{bako_class_2017}
L.~Bako.
\newblock On a class of optimization-based robust estimators.
\newblock {\em IEEE Transactions on Automatic Control}, 62(11):5990--5997,
  2017.

\bibitem{candes_restricted_2008}
E.~J. Candes.
\newblock The restricted isometry property and its implications for compressed
  sensing.
\newblock {\em Comptes rendus mathematique}, 346(9-10):589--592, 2008.

\bibitem{Candes08-SPM}
E.~J. Candès and M.~B. Wakin.
\newblock An introduction to compressive sampling.
\newblock {\em IEEE Signal Processing Society}, 25:21--30, 2008.

\bibitem{cardenas_secure_2008}
A.~Cardenas, S.~Amin, and S.~Sastry.
\newblock Secure control: Towards survivable cyber-physical systems.
\newblock In {\em International Conference on Distributed Computing Systems
  Workshops, Beijing, China}, pages 495--500, 2008.

\bibitem{chang_secure_2018}
Y.~H. Chang, Q.~Hu, and C.~J. Tomlin.
\newblock Secure estimation based kalman filter for cyber–physical systems
  against sensor attacks.
\newblock {\em Automatica}, 95:399--412, 2018.

\bibitem{fawzi_secure_2014}
H.~Fawzi, P.~Tabuada, and S.~Diggavi.
\newblock Secure estimation and control for cyber-physical systems under
  adversarial attacks.
\newblock {\em IEEE Transactions on Automatic Control}, 59(6):1454--1467, 2014.

\bibitem{Foucart13-Book}
S.~Foucart and H.~Rauhut.
\newblock {\em A mathematical introduction to compressive sensing}.
\newblock Birkhäuser, 2013.

\bibitem{gelb_applied_1974}
A.~Gelb.
\newblock {\em Applied optimal estimation}.
\newblock MIT press, 1974.

\bibitem{grant_cvx_2017}
M.~C. Grant and S.~P. Boyd.
\newblock {CVX}: {Matlab} software for disciplined convex programming, version
  2.1.
\newblock 2017.

\bibitem{Han19-TAC}
D.~Han, Y.~Mo, and L.~Xie.
\newblock Convex optimization based state estimation against sparse integrity
  attacks.
\newblock {\em IEEE Transaction on Automatic Control (DOI:
  10.1109/TAC.2019.2891458)}, 2019.

\bibitem{Kellett14}
C.~M. Kellett.
\newblock A compendium of comparison function results.
\newblock {\em Mathematics of Control, Signals, and Systems}, 26:339--374,
  2014.

\bibitem{mishra_secure_2017}
S.~Mishra, Y.~Shoukry, N.~Karamchandani, S.~N. Diggavi, and P.~Tabuada.
\newblock Secure state estimation against sensor attacks in the presence of
  noise.
\newblock {\em {IEEE} Transactions on Control of Network Systems}, 4(1):49--59,
  2017.

\bibitem{pajic_attack-resilient_2017}
M.~Pajic, I.~Lee, and G.~J. Pappas.
\newblock Attack-resilient state estimation for noisy dynamical systems.
\newblock {\em {IEEE} Transactions on Control of Network Systems}, 4(1):82--92,
  2017.

\bibitem{pasqualetti_attack_2013}
F.~Pasqualetti, F.~Dorfler, and F.~Bullo.
\newblock Attack detection and identification in cyber-physical systems.
\newblock {\em {IEEE} Transactions on Automatic Control}, 58(11):2715--2729,
  2013.

\bibitem{rockafellar_convex_1970}
R.~T. Rockafellar.
\newblock {\em Convex Analysis}.
\newblock Princeton University Press.

\bibitem{sharon_minimum_2009}
Y.~Sharon, J.~Wright, and Y.~Ma.
\newblock Minimum sum of distances estimator: Robustness and stability.
\newblock In {\em American Control Conference, St. Louis, MO, USA}, pages
  524--530, 2009.

\bibitem{shoukry_event-triggered_2016}
Y.~Shoukry and P.~Tabuada.
\newblock Event-triggered state observers for sparse sensor noise/attacks.
\newblock {\em {IEEE} Transactions on Automatic Control}, 61(8):2079--2091,
  2016.

\bibitem{sturm_using_1999}
J.~F. Sturm.
\newblock Using {SeDuMi} 1.02, a {MATLAB} toolbox for optimization over
  symmetric cones.
\newblock {\em Optimization methods and software}, 11(1-4):625--653, 1999.

\end{thebibliography}

\end{document}